\newtheorem{theorem} {Theorem}
\newtheorem{lemma} {Lemma}
\newtheorem{definition} {Definition}
\newtheorem{corollary} {Corollary}
\newcommand{\ceil}[1]{\left \lceil #1 \right \rceil}
\newcommand{\E}{{\mathbb{E}}}
\newcommand{\eps}{\varepsilon}
\newcommand{\poly}{\operatorname{poly}}
\newcommand{\var}[1]{\operatorname{#1}}
\newcommand{\func}[1]{\operatorname{\textsc{#1}}}
\newcommand{\rank}{R}
\title{Optimal Quantile Approximation in Streams}
\author{
Zohar Karnin\\ \texttt{Yahoo Research}\\ \texttt{\small zkarnin@yahoo-inc.com}
\and 
Kevin Lang\\ \texttt{Yahoo Research}\\ \texttt{\small langk@yahoo-inc.com}
\and
Edo Liberty\\ \texttt{Yahoo Research}\\ \texttt{\small edo@yahoo-inc.com}
}
\date\nonumber
\begin{document}
\maketitle

\begin{abstract}
This paper resolves one of the longest standing basic problems in the streaming computational model. 
Namely, optimal construction of quantile sketches.
An $\varepsilon$ approximate quantile sketch receives a stream of items $x_1,\ldots,x_n$ and allows one to approximate the rank of any query up to additive error $\varepsilon n$ with probability at least $1-\delta$.
The rank of a query $x$ is the number of stream items such that $x_i \le x$.
The minimal sketch size required for this task is trivially at least $1/\varepsilon$.
Felber and  Ostrovsky obtain a $O((1/\varepsilon)\log(1/\varepsilon))$ space sketch for a fixed $\delta$.
To date, no better upper or lower bounds were known even for randomly permuted streams or for approximating a specific quantile, e.g.,\ the median.
This paper obtains an $O((1/\varepsilon)\log \log (1/\delta))$ space sketch and a matching lower bound. 
This resolves the open problem and proves a qualitative gap between randomized and deterministic quantile sketching.
One of our contributions is a novel representation and modification of the widely used merge-and-reduce construction. 
This subtle modification allows for an analysis which is both tight and extremely simple.
Similar techniques should be useful for improving other sketching objectives and geometric coreset constructions.

\end{abstract}

\section{Introduction}
Given a set of items $x_1,\ldots,x_n$, the quantile of a value $x$ is the fraction of items in the stream such that $x_i \le x$.
It is convenient to define the rank of $x$, $\rank(x)$, as the \emph{number} of items such that $x_i \le x$.
An additive error $\eps n$ for $\rank(x)$ is an $\eps$ approximation of its rank.
The literature distinguishes between several different definitions of this problem.
In this manuscript we distinguish between the single quantile approximation problem and the all quantiles approximation problem.
\begin{definition} The single quantile approximation problem:
Given $x_1,\ldots,x_n$ in a streaming fashion in arbitrary order, construct a data structure for 
computing $\tilde{\rank}(x)$.
By the end of the stream, receive a single element $x$ and compute $\tilde{\rank}(x)$ such that $| \tilde{R}(x) - \rank(x) | \le \eps n$ with probability $1-\delta$.
\end{definition}
There are variations of this problem in which, the algorithm is not given $x$ (as a query) but rather a rank $r$. 
It should be able to provide an element $x_i$ from the stream such that $|\rank(x_i) - r| \le \eps n$.
There are also variants that make the value $r$, or $r/n$ known to the algorithm in advance. 
For example, one could a priori choose to search for an approximate median.\footnote{We mention that there are easy reductions between the different variants that maintain the failure probability and error up to a constant.}
The solution we propose solves all the above variants by solving a harder task called the \emph{all quantiles} problem. 
\begin{definition} The all quantiles approximation problem:
Given $x_1,\ldots,x_n$ in a streaming fashion in arbitrary order, construct a data structure for computing $\tilde{\rank}(x)$.
By the end of the stream, with probability $1-\delta$, for all values of $x$ simultaneously it should hold that $| \tilde{\rank}(x) - \rank(x) | \le \eps n$.
\end{definition}
Observe that approximating a set of $O(1/\eps)$ single queries well suffices for solving the all quantiles approximation problem.
Therefore, solving the single quantiles approximation problem with failure probability at most $\eps\delta$ constitutes a valid solution for the 
all quantiles approximation problem simply by invoking the union bound.

\subsection{Related Work}
Two recent surveys \cite{Wang13}\cite{GK2016} on this problem give ample motivation and explain the state of the art in terms of algorithms and theory in a very accessible way.\footnote{Manuscript \cite{GK2016} was authored in 2007 as a book chapter. It does not contain recent results but is an excellent survey nonetheless.} In what follows, we shortly review some of the prior work that is most relevant in the context of this manuscript.
For readability, space complexities of randomized algorithms apply for a constant success probability unless otherwise stated.

Manku, Rajagopalan and Lindsay \cite{Manku99} built on the work of Munro and Paterson \cite{MUNRO1980315} and gave a randomized solution which uses at most $O((1/\eps) \log^2(n \eps))$ space. 
A simple deterministic version of their algorithm achieves the same bounds. This was pointed out, for example, by \cite{Wang13}. 
We refer to their algorithm as MRL. 
Greenwald and Khanna \cite{GreenwaldK01} created an intricate deterministic algorithm that requires $O((1/\eps) \log(n \eps))$ space.
This is the best known deterministic algorithm for this problem.
We refer to their algorithm as GK.

Allowing randomness enables sampling.  
A uniform sample of size $n' = O(\log(1/\eps)/\eps^2)$ from the stream suffices to produce an all quantiles sketch. 
Feeding the sampled elements into a GK sketch yields an $O((1/\eps) \log(1/\eps))$ solution.
However, to produce such samples, one must know $n$ (at least approximately) in advance. 
This observation was already made by Manku et al. \cite{Manku99}.
Since $n$ is not known in advance it is not a trivial task to combine sampling with GK sketches.
Recently, Felber and Ostrovsky \cite{FelberO15a} managed to 
do exactly that. They achieved space complexity of $O((1/\eps)\log(1/\eps))$ by using sampling and several GK sketches in conjunction in a non trivial way. To the best of our knowledge, this is the best known space complexity result to date.

\paragraph{Mergeability:} An important property of sketches is called mergability \cite{Agarwal12}.
Informally, this property allows one to sketch different sections of the stream independently and then combine the resulting sketches.
The combined sketch should be as accurate as a single sketch one would have computed had the entire stream been consumed by a single sketcher. This is formally stated in Definition~\ref{def:merge}.
\begin{definition} \label{def:merge}
Let $S$ denote a sketching algorithm mapping a stream $N$ to a summary $S(N)$, and denote by $\eps(S(N),N)$ the error associated with the summary $S(N)$ on the stream $N$. 
Let $N_1,N_2$ denote two sequences of items, and let $N=[N_1, N_2]$ denote the sequence obtained by concatenating $N_1,N_2$.
A sketching algorithm $S$ is mergeable if there exists a merge operation $M$ such that
\[
\eps(M(S(N_1),S(N_2)),N) \le \eps(S(N),N)
\]
\end{definition}
This property is extremely important in practice since large datasets are often distributed across many machines.
Agarwal et al \cite{Agarwal12} conjecture that the GK sketch is not mergeable. They describe a mergeable sketch 
of space complexity $(1/\eps)\log^{3/2}(1/\eps)$. It is worth noting that this results predates that of \cite{FelberO15a}.


\paragraph{Model:} Different sketching algorithms perform different kinds of operations on the elements in the stream.
The most restricted model is the comparison model. 
In this model, there is a strong order imposed on the elements and the algorithm can only compare two items to decide which is larger.
This is the case, for example, for lexicographic ordering of strings. 
All the cited works above operate in this model.
Another model assumes the total size of the universe is bounded by $|U|$. 
An $O((1/\eps)\log(|U|))$ space algorithm was suggested by \cite{Shrivastava04} in that model.
If the items are numbers, for example, one could also compute averages or perform gradient descent like algorithms such as \cite{brodnik2013space}.
In machine learning, a quantile loss function refers to an asymmetric or weighted $\ell_1$ loss. 
Predicting the values of the items in the stream with the quantile loss converges to the correct quantiles.
Such methods however only apply to randomly shuffled streams.

\paragraph{Lower bounds:} For any algorithm, there exists a (trivial) space lower bound of $\Omega(1/\eps)$.
Hung and Ting \cite{hung2010omega} showed that any \emph{deterministic} comparison based algorithm for the  \ single quantile \  approximation \  problem must store $\Omega((1/\eps)\log(1/\eps))$ items. 
Felber and Ostrovsky \cite{FelberO15a} suggest, as an open problem, that the $\Omega ((1/\eps) \log(1/\eps))$ lower bound could potentially hold for randomized algorithms as well. 
Prior to this work, it was very reasonable to believe this conjecture is true. 
For example, no $o((1/\eps)\log(1/\eps))$ space algorithm was known even for randomly permuted streams.\footnote{If the stream is presented in random order, obtaining an $O((1/\eps)\log(1/\eps))$ solution is easy.
Namely, save the first $O((1/\eps)\log(1/\eps))$ items and, from that point on, only count how many items fall between two consecutive samples.
Due to coupon collector, at least one sample will fall in each stretch of $\eps n$ items which makes the solution trivially correct.}

\subsection{Main Contribution}
We begin by re-explaining the work of \cite{Agarwal12} and the deterministic version of \cite{Manku99} from a slightly different view point.
The basic building block for these algorithms is a single compactor.
A compactor can store $k$ items all with the same weight $w$.
It can also compact its $k$ elements into $k/2$ elements of weight $2w$ as follows.
First, the items are sorted. Then, either the even or the odd elements in the sequence are chosen.
The unchosen items are discarded.
The weight of the chosen elements is doubled (set to $2w$).
Consider a single query $x$. 
Its rank estimation before and after the compaction defers by at most $w$ regardless of $k$.
This is illustrated in Figure~\ref{compactor}.

\begin{figure}[!ht]
  \centering
      \includegraphics[width=0.8\textwidth]{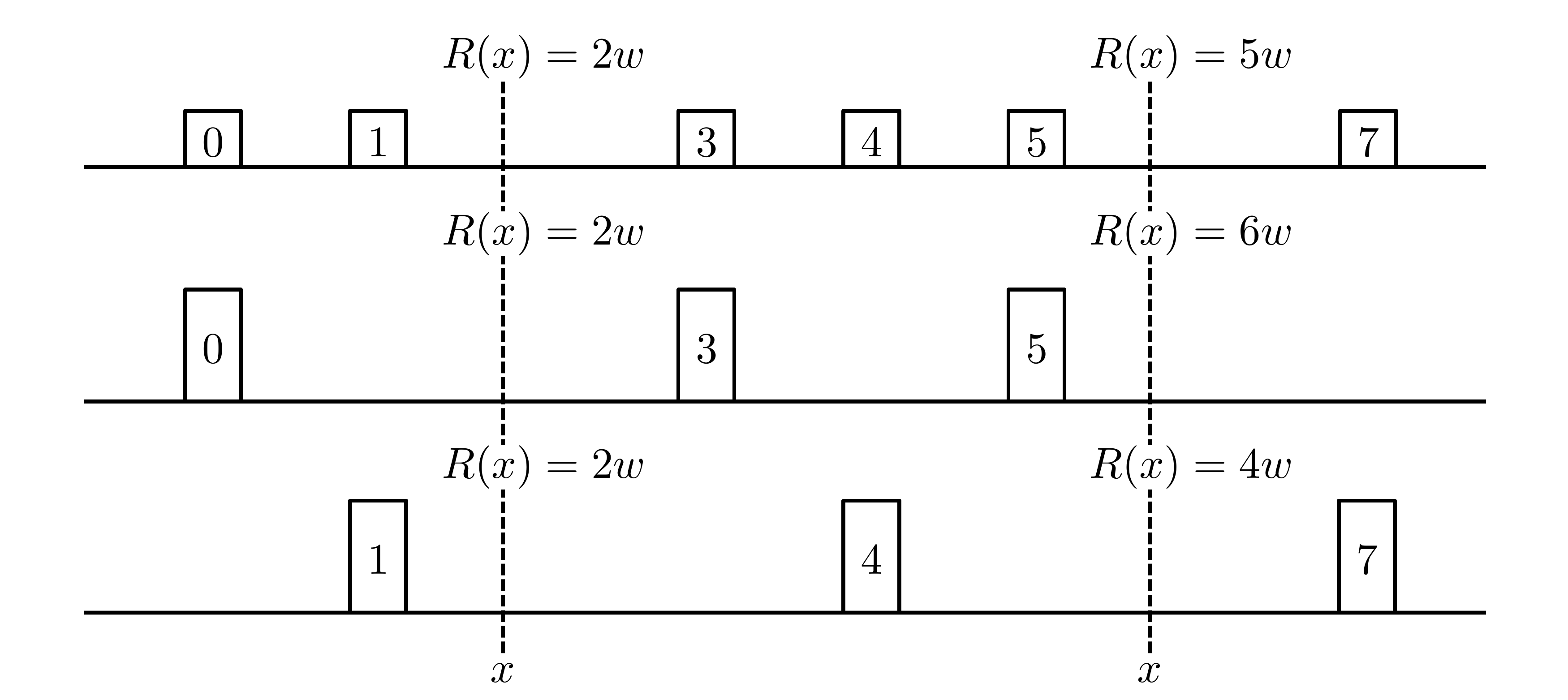}
  \caption{An illustration of a single compactor with $6$ items performing a single compaction operation.
  The rank of a query remains unchanged if its rank with in the compactor is even.
  If it is odd, its rank is increased or decreased by $w$ with equal probability by the compaction operation.}
  \label{compactor}
\end{figure}

This already gives a deterministic algorithm. Assume we use such a compactor. When it outputs items, we feed them into another compactor and so on.
Since each compactor halves the number of items in the sequence, there could be at most $H \le \ceil{\log(n/k)}$ compactors chained together.
Let $h$ denote the height of a compactor where the last one created has height $h=H$ and the first one has height $h=1$.
Let $w_h = 2^{h-1}$ be the weight of items compactor $h$ gets. Then, the number of compact operations it performs is at most $m_h = n/kw_h$.
Summing up all the errors in the system $\sum_{h=1}^{H} m_h w_h = \sum_{h=1}^{H} n/k = Hn/k \le n\log(n/k)/k$.
Since we have $H$ compactors, the space usage is $kH  \le k \log(n/k)$. 
Setting $k = O((1/\eps)\log(\eps n))$ yields error of $\eps n$ with space $O((1/\eps)\log^{2}(\eps n))$.

One conceptual contribution of Agarwal et al.\ \cite{Agarwal12} is to have each compactor delete the odd or even items with equal probability.
This has the benefit that the expected error is zero. It also lets one use standard concentration results to bound the maximal error.
This eliminates one $\log$ factor from the worst case analysis. 
However, the dependence on the failure probability adds a factor of $\sqrt{\log(1/\delta)}$. 
In the all quantiles problem this translates into an additional $\sqrt{\log(1/\eps)}$ factor for constant failure probability.
%
Intuitively Agarwal et al.\ \cite{Agarwal12} also show that when $n \ge \poly(1/\eps)$ one can sample items from the stream before feeding them to the sketch.
This gives total space usage of  $O\left((1/\eps)\log(1/\eps) \sqrt{\log(1/\delta)}\right)$ for the single quantile problem and $O\left((1/\eps)\log(1/\eps) \sqrt{\log(1/\eps\delta)}\right)$ for the all quantiles problem.

The first improvement we provide to the algorithms above is to use different compactor capacities in different heights, denoted by $k_h$.
We show that $k_h$ can, for example, decrease exponentially $k_h \approx k_H (2/3)^{H-h}$.
That is, compactors in lower levels in the hierarchy can operate with significantly less capacity.
Surprisingly enough, this turns out to not effect the asymptotic statistical behavior of the error at all. 
Moreover, the space complexity is clearly improved.

The capacity of any functioning compactor must be at least $2$. This could contribute $O(H) = O(\log(n/k))$ to the space complexity.
To remove this dependence, we notice that a sequence of $H''$ compactors with capacity $2$ essentially perform sampling. 
Out of every $2^{H''}$ elements they select one at random and output that element with weight $2^{H''}$.
This is clearly very efficiently computable and does not truly require memory complexity of $O(H'')$ but rather of $O(1)$.
The total capacity of all compactors whose capacity is more than $2$ is bounded by $\sum_{h= H''+1}^{H} k (2/3)^{H-h} \le 3k$.
This yields a total space complexity of $O(k)$.
Setting $k = O((1/\eps)\sqrt{\log(1/\delta)})$ gives an algorithm with space complexity $O((1/\eps)\sqrt{\log(1/\delta)})$ for the single quantile problem and $O((1/\eps)\sqrt{\log(1/\eps)})$ for the all quantiles problem, which constitutes our first result.
Interestingly, our algorithm can be thought of a smooth interpolation between carful compaction of heavy items and efficient sampling for light items.
We believe this idea would be potentially useful in other geometric coreset construction problems.

%

The next improvement comes from special handling of the top $\log\log (1/\delta)$ compactors.
Intuitively, the number of compaction operations (and therefore random bits) in those levels is so small that one could expect the worst case behavior. 
For worst case analysis a fixed $k_h$ is preferred to diminishing values of $k_h$. 
Therefore, we suggest to set $k_h = k$ when $h \ge H - O(\log\log (1/\delta))$ and $k_h \approx k (2/3)^{H-h}$ otherwise.
By analyzing the worst case error of the top $\log\log (1/\delta)$ compactors separately from the bottom $H - \log\log (1/\delta)$ we improve our analysis to $O((1/\eps)\log^2 \log(1/\delta))$. This sketch is fully mergeable.
Interestingly, the worst case analysis of the top $\log\log (1/\delta)$ compactors is identical to the analysis of the MRL sketch above.

This last observation leads us to our third and final improvement. 
If one replaces the top $\log\log (1/\delta)$ compactors with a GK sketch, the space complexity can be shown to reduce to $O((1/\eps)\log\log(1/\delta))$.
However, this prevents the sketch from being mergeable because the GK sketch is not known to have this property. 

Another way to view this algorithm is as a concatenation of three sketches. The first, receiving the stream of elements is a sampler that simulates all the compactors of capacity $2$. Its output is fed into a sketch composed of a sequence of compactors of increasing sizes, as described above. We refer to such a sketch as a \emph{KLL} sketch. This sketch outputs $O((1/\eps) \poly \log(1/\delta))$ items that are fed into an instance of GK.
This idea is illustrated in Figure~\ref{three}.

\hspace{3cm}
\begin{figure}[!ht]
\hspace{1cm}
\begin{minipage}{1\textwidth}
      \includegraphics[width=0.8\textwidth]{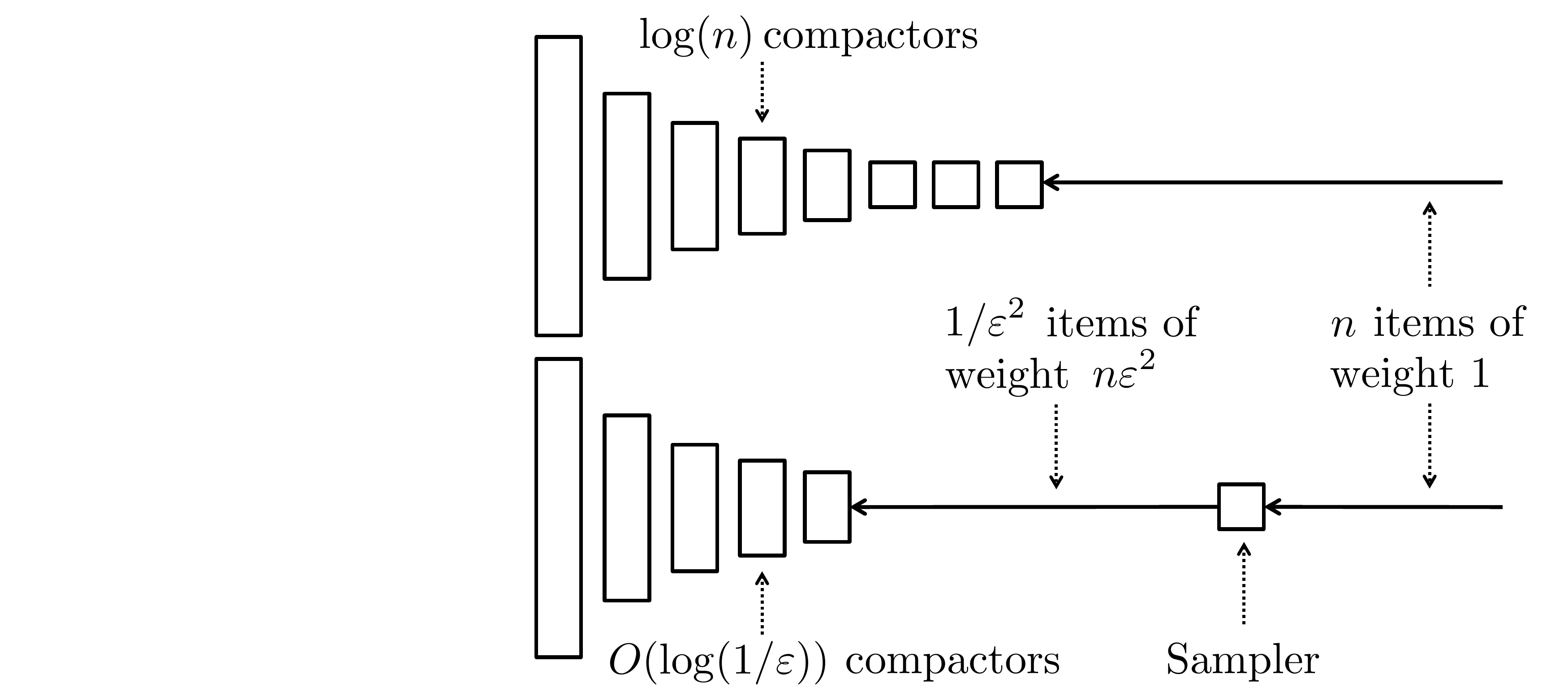}\\
      \includegraphics[width=0.8\textwidth]{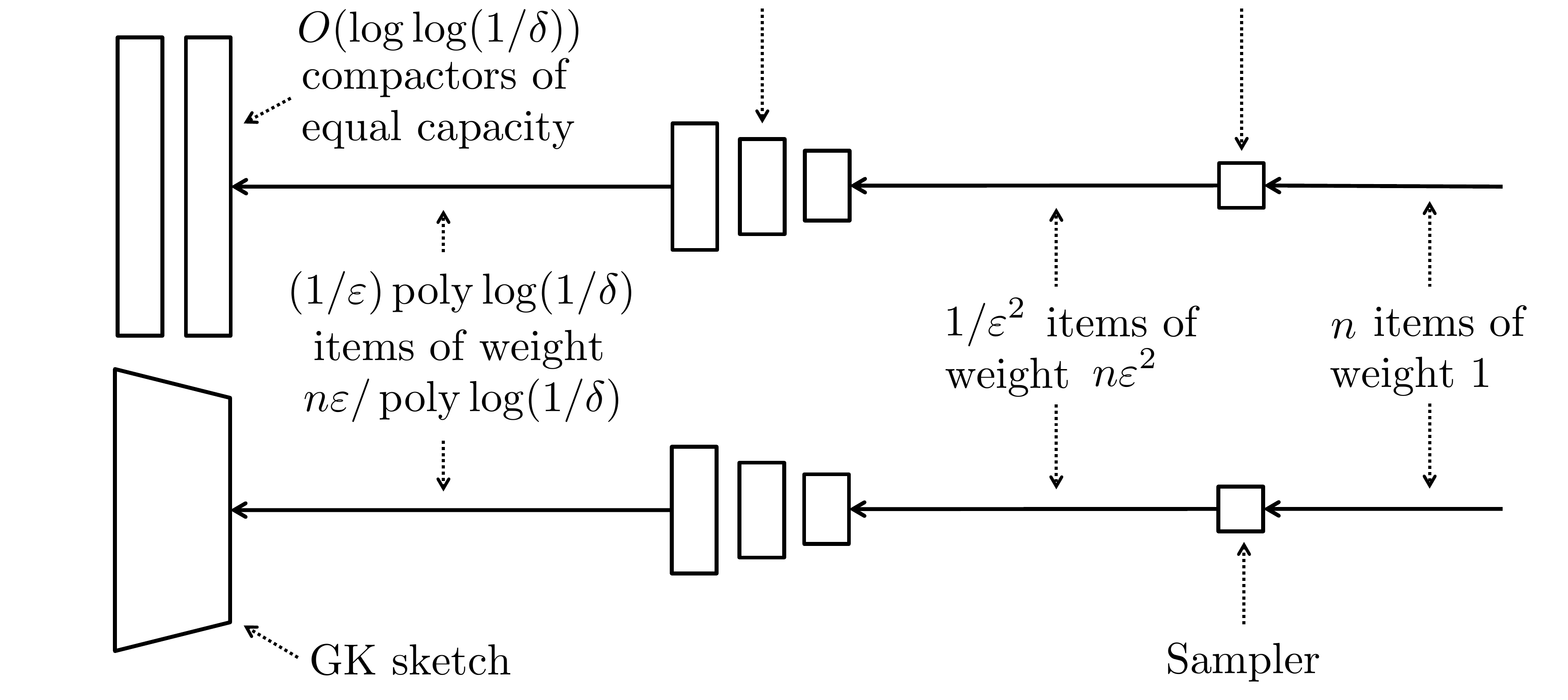}
\end{minipage}
\caption{Sampling, varying capacity compactors, equal capacity compactors and GK sketches achieve different efficiencies at different stream lengths.
The set of capacitated compactors used by KLL creates a smooth transition between sampling and GK sketching.
The top figure corresponds to the first construction explained in Section~\ref{analysis}. The second from the top is explained in Section~\ref{reducing}.
The two bottom figures correspond to our main contributions and are explained in Sections~\ref{reducing1} and~\ref{reducing2} respectively.}

\label{three}
\end{figure}

\newcommand{\tp}[2]{\parbox{#1cm}{\vspace{0.1cm}#2\vspace{0.1cm}}}

\begin{table}[h!]
{
\footnotesize
\begin{center}
\begin{tabular}{|l|c|c|c|c|c|} \hline
	                         				& Single quantile  			& All quantiles	& \tp{1.9}{Randomized} &  \tp{1.7}{Mergeable}  \\ \hline
\tp{2}{MRL \cite{Manku99}}		& $(1/\eps)\log^{2}(\eps n)$	& $(1/\eps)\log^{2}(\eps n)$ 	& No 		& Yes 		 \\ \hline
\tp{2}{GK \cite{GreenwaldK01}}		& $(1/\eps)\log(\eps n)$		& $(1/\eps)\log(\eps n)$ 		& No 		& No 				 \\ \hline
\tp{2}{ACHPWY \cite{Agarwal12}}	& \tp{3.8}{$(1/\eps)\log(1/\eps)\sqrt{\log(1/\delta)}$}	& \tp{3.8}{$(1/\eps)\log(1/\eps) \sqrt{\log(1/\eps\delta)}$} 	& Yes 		& Yes \\ \hline  
\tp{2}{FO \cite{FelberO15a}\\${\small \delta=e^{-(1/\eps)^{c}}}$}		& $(1/\eps)\log(1/\eps)$		& $(1/\eps)\log(1/\eps)$ & Yes & No \\ \hline
\tp{2}{KLL \newline [This paper] }	& $(1/\eps) \log^2 \log(1/\delta)$				& $(1/\eps)\log^2\log(1/\delta\eps)$ 	& Yes		& Yes  \\ \hline
\tp{2}{KLL  \newline [This paper]} & $(1/\eps)\log\log(1/\delta)$				& $(1/\eps)\log\log(1/\delta\eps)$	& Yes		& No \\ \hline
\end{tabular}
\end{center}
}
\caption{The table describes the space complexity of several streaming quantiles  algorithms in big-$O$ notation.  
The randomized algorithms are required to succeed with constant probability.  They all work in the comparison model and for arbitrarily ordered streams.
The non-mergeability of some of these algorithms is due to the fact that they use GK, which is not known to be mergeable, as a subroutine. }
\label{contib}
\end{table}%

\section{Algorithm and Analysis}\label{analysis}
As mentioned above, our sketching algorithm (KLL) includes a hierarchy of compactors with varying capacities. 
Consider a run of the algorithm that terminates with $H$ different compactors.
The compactors are indexed by their hight $h \in 1,\ldots,H$. 
The weight of items at hight $h$ is $w_h=2^{h-1}$.
Denote by $k_h$ the smallest number of items that the compactor at height $h$ contains during a compact operation.
For brevity, denote $k = k_H$.
For reasons that will become clear later assume that $k_h \ge k c^{H-h}$ for $c \in (0.5,1)$.

Since the top compactor was created, we know that the second compactor from the top compacted its elements at least once. 
Therefore $n \ge k_{H-1}w_{H-1} =  k_{H-1}2^{H-2}$ which gives
\[
H \le \log(n/k_{H-1})+2 \le \log(n/ck)+2 \ .
\]
Using the above we can bound the number of compact operations $m_h$ at height $h$.
Every compact procedure call is performed on at least $k_h$ items and the items have a weight of $w_h = 2^{h-1}$. Therefore, 
$$m_h \le \frac{n}{k_h w_h} \le \frac{2n}{k2^H} (2/c)^{H-h} \le (2/c)^{H-h-1} \ .$$

To analyze the total error produced by the sketch, we first consider the error generated in each individual level.
Define by $\rank(k,h)$ the rank of $x$ among the following weighted set of items. 
The items yielded by the compactor at height $h$ and all the items stored in the compactors of heights $h' \le h$ at end of the stream.
For convenience $\rank(x,0) = \rank(x)$ is the exact rank of $x$ in the input stream.
Define $\var{err}(x,h) = \rank(x,h) - \rank(x,h-1)$ to be the total change in the approximated rank of $x$ due to level $h$. 

Note that each compaction operation in level $h$ either leaves the rank of $x$ unchanged or adds $w_h$ or subtracts $w_h$ with equal probability.
To be more explicit, if $x$ has an even rank among the item inside the compactor, the total mass to the left of it (its rank) is unchanged by the compaction operation.
If its rank inside the compactor is odd however and the odd items are chosen, this mass increases by $w_h$.
If its rank inside the compactor is odd and the even items are chosen, its mass decreases by $w_h$. 
Therefore, $\var{err}(x,h) = \sum_{i=1}^{m_h} w_h X_{i,h}$ where $\E[X_{i,h}] = 0$ and $|X_{i,h}| \le 1$.
The final discrepancy between the real rank of $x$ and its approximation $\tilde{\rank}(x) = \rank(x,H)$ is
\begin{equation*} 
\rank(x,H) - \rank(x,0) =  \sum_{h=1}^{H}  \rank(x,h) - \rank(x,h-1) = \sum_{h=1}^{H} \var{err}(x,h)  = \sum_{h=1}^{H} \sum_{i=1}^{m_h} w_h X_{i,h} \ .
\end{equation*}
\begin{lemma} [Hoeffding] \label{lem:hoeffding}
Let $X_1,\ldots,X_m$ be independent random variables, each with an expected value of zero, taking values in the range $[-w_i,w_i]$. Then for any $t>0$ we have
$$ \Pr\left[ \left| \sum_{i=1}^m X_i \right|>t \right] \le 2\exp\left( -\frac{t^2}{2 \sum_{i=1}^m w_i^2} \right) $$
with $\exp$ being the natural exponent function.
\end{lemma}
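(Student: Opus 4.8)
The plan is to use the standard exponential moment (Chernoff) method. First I would fix a parameter $s>0$ and apply Markov's inequality to the nonnegative random variable $\exp\!\left(s\sum_i X_i\right)$, which together with independence of the $X_i$ gives
\[
\Pr\left[\sum_{i=1}^m X_i > t\right] \le e^{-st}\,\E\left[\exp\left(s\sum_{i=1}^m X_i\right)\right] = e^{-st}\prod_{i=1}^m \E\left[e^{s X_i}\right].
\]

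The crux is to bound each factor $\E\left[e^{s X_i}\right]$. Here I would invoke Hoeffding's lemma: for a mean-zero random variable $Y$ supported on an interval $[a,b]$ one has $\E\left[e^{sY}\right] \le \exp\!\left(s^2(b-a)^2/8\right)$; applied with $[a,b]=[-w_i,w_i]$ this yields $\E\left[e^{s X_i}\right] \le \exp\!\left(s^2 w_i^2/2\right)$. Proving this lemma is the one genuinely technical step. The cleanest route is to set $\psi(s)=\log\E\left[e^{sY}\right]$, note $\psi(0)=0$ and $\psi'(0)=\E[Y]=0$, and observe that $\psi''(s)$ is the variance of $Y$ under the exponentially tilted law, which, being supported on $[a,b]$, is at most $(b-a)^2/4$; a second-order Taylor expansion then gives $\psi(s)\le s^2(b-a)^2/8$. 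Alternatively, one can bound $e^{sy}$ by the chord $\tfrac{b-y}{b-a}e^{sa}+\tfrac{y-a}{b-a}e^{sb}$ using convexity of the exponential, take expectations, and analyze the resulting explicit function of $s$.

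Combining the two displays, for every $s>0$,
\[
\Pr\left[\sum_{i=1}^m X_i > t\right] \le \exp\left(-st + \frac{s^2}{2}\sum_{i=1}^m w_i^2\right).
\]
I would then minimize the exponent over $s$; the optimal choice is $s = t/\sum_i w_i^2$, which produces the one-sided bound $\exp\!\left(-t^2 / \left(2\sum_i w_i^2\right)\right)$. Finally, to obtain the two-sided statement I would run the identical argument on $-\sum_i X_i$ (whose summands again have mean zero and the same symmetric ranges) and combine the two tail estimates by a union bound, which accounts for the factor $2$. I expect the only real obstacle to be the proof of Hoeffding's lemma; the remainder is routine optimization and bookkeeping, and since this is a classical inequality one could also simply cite it.
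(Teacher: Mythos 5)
Your proposal is a correct and complete derivation of Hoeffding's inequality via the standard exponential-moment (Chernoff) method, with the key step being Hoeffding's lemma for the moment generating function. The paper itself does not prove this lemma—it is stated as a classical result and used as a black box—so there is no paper proof to compare against; your argument is the standard textbook one and is sound, including the correct specialization $(b-a)^2/8 = w_i^2/2$ for the symmetric interval $[-w_i,w_i]$, the optimization $s = t/\sum_i w_i^2$, and the union bound for the two-sided statement.
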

We now apply Hoeffding's inequality to bound the probability that the bottom $H' \leq H$ compactors contribute more than $\eps n$ to the total error. 
The reason for considering only the bottom levels and not all the levels will become apparent in Section~\ref{reducing}.
\begin{equation}\label{epsn}
\Pr\left[ \left| \rank\left(x,H' \right) - \rank(x,0) \right| > \eps n \right] =  \Pr\left[ \sum_{h=1}^{H'} \sum_{i=1}^{m_h} w_h X_{i,h} >  \eps n \right] \le 2\exp\left( -\frac{\eps^2 n^2}{2 \sum_{h=1}^{H'} \sum_{i=1}^{m_h} w^2_h} \right)\end{equation}
A straight forward computation shows that 
\begin{eqnarray*}
\sum_{h=1}^{H'} \sum_{i=1}^{m_h} w^2_h &=& \sum_{h=1}^{H'} m_h w^2_h \le  \sum_{h=1}^{H'} (c/2)^{H' - h -1} 2^{2h-1} \\
&=& \frac{(2/c)^{H'-1}}{4} \sum_{h=1}^{H'}(2c)^h \le \frac{(2/c)^{H'-1}}{4} \frac{(2c)^{H'}}{2c-1} \le \frac{c}{8(2c-1)}2^{2H'}
\end{eqnarray*}

\noindent Substituting $2^{2H'} = 2^{2H}/2^{2(H-H')}$ and recalling that $H \le \log(n/ck) +2$ we get that
\begin{equation}\label{sumw}
\sum_{h=1}^{H'} \sum_{i=1}^{m_h} w^2_h \le \frac{n^2/k^2}{2c^2(2c-1)} \frac{1}{2^{2(H-H')}}
\end{equation}

\noindent Substituting Equation~\ref{sumw} into Equation~\ref{epsn} and setting $C = c^2(2c-1)$ we get the following convenient form
\begin{equation}\label{nice1}
\Pr\left[ |\rank(x,H') - \rank(x)| \ge \eps n\right] \le 2 \exp \left( -C\eps^2 k^2 2^{2(H-H')} \right)
\end{equation}

\begin{theorem}\label{kll0}
There exists a streaming algorithm that computes an $\eps$ approximation for the rank of a single item with probability $1-\delta$ whose space complexity is $O((1/\eps)\sqrt{\log(1/\delta)} + \log(\eps n))$. This algorithm also produces mergeable summaries.
\end{theorem}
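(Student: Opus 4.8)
The plan is to instantiate the construction and analysis developed above. Fix a constant $c\in(1/2,1)$, say $c=2/3$, use capacities $k_h=\max\{2,\ceil{k\,c^{H-h}}\}$, and take $k=k_H$ to be the smallest integer with $2\exp(-C\eps^2k^2)\le\delta$, where $C=c^2(2c-1)$; solving this inequality gives $k=\Theta\big((1/\eps)\sqrt{\log(1/\delta)}\big)$. Clamping the capacities from below at $2$ only increases them, so the standing hypothesis $k_h\ge k\,c^{H-h}$ still holds, and Equation~\ref{nice1} applies verbatim with $H'=H$ (hence $2^{2(H-H')}=1$): for any fixed query $x$, $\Pr[\,|\tilde{\rank}(x)-\rank(x)|\ge\eps n\,]=\Pr[\,|\rank(x,H)-\rank(x)|\ge\eps n\,]\le 2\exp(-C\eps^2k^2)\le\delta$. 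If $n\le k$ the sketch simply stores the stream and has zero error, so we may assume at least one compaction took place and the bound $H\le\log(n/ck)+2$ is available.

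For the space bound I would split the levels at the height $H''$ below which $k\,c^{H-h}<2$. The ``heavy'' levels $h>H''$ are the only ones with capacity exceeding $2$; there are $O(\log k)$ of them, and by the geometric decay their total capacity is $\sum_{h>H''}k_h=O(k)=O\big((1/\eps)\sqrt{\log(1/\delta)}\big)$. Every ``light'' level $h\le H''$ is a capacity-$2$ compactor, and a chain of such compactors emits, for each block of $2^{H''}$ consecutive inputs, a single uniformly random representative of weight $2^{H''}$ --- exactly a reservoir sampler with a block counter. Hence the light levels are realized by $O(1)$ stored items plus one counter of $O(\log(\eps n))$ bits (all one must remember is the current height $H\le\log(n/ck)+2$), giving the additive $O(\log(\eps n))$ term. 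The point to stress is that replacing the light compactors by the sampler does not perturb the error: the sampler reproduces the same per-compaction $\pm w_h$ steps, so the error is still $\sum_{h=1}^{H}\sum_{i=1}^{m_h}w_hX_{i,h}$ and Equation~\ref{nice1} is unchanged. As $H$ grows with the stream, the implementation periodically increments $H''$ --- absorbing the current bottom compactor into the sampler and doubling its period --- which is an $O(1)$ update.

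For mergeability I would exhibit a merge operation in the style of Agarwal et al.~\cite{Agarwal12}: given two summaries, first bring them to a common sampling rate by pushing the lighter summary's low-weight items through the extra capacity-$2$ compactions needed to match the coarser rate; then align the two compactor hierarchies by item weight, concatenate the compactors level by level, and sweep upward performing a compaction at every level whose capacity is now exceeded. The result is again a legal sketch, and a counting argument (total mass is $n$, each compaction at height $h$ consumes at least $k_h$ items of weight $w_h$, and each merge adds only $O(1)$ extra compactions per level) shows that the number of compactions ever performed at each height is still $O(n/(k_hw_h))$. Therefore the error is again a sum of independent, zero-mean, $w_h$-bounded terms whose total variance obeys the estimate of Equations~\ref{sumw}--\ref{nice1} up to a constant, which is absorbed into $C$ by enlarging $k$ by a constant; so the merged summary meets the same $(\eps,\delta)$ guarantee, as Definition~\ref{def:merge} requires.

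I expect the main obstacle to be the merge analysis rather than anything probabilistic: one must check that aligning and concatenating two hierarchies of varying-capacity compactors, and reconciling their two sampling rates, never inflates the per-level compaction count beyond $O(n/(k_hw_h))$, since only then does the clean Hoeffding bound survive the merge. Choosing $k$, the telescoping capacity sum, and identifying a capacity-$2$ chain with a reservoir sampler are routine given what is already established.
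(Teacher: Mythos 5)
Your proof diverges from the paper's in a way that opens a real gap. The paper's proof of this particular theorem does \emph{not} replace the low levels with a sampler: it keeps all $H$ compactors in memory, sets $k_h=\ceil{kc^{H-h}}+1\ge2$, and simply observes that $\sum_h k_h \le k/(1-c)+2H=O(k+\log(n/k))$. The $+\log(\eps n)$ term in the theorem is precisely the cost, in stored items, of the $O(H)$ capacity-$2$ compactors at the bottom. Keeping them around is what makes the mergeability claim trivial: a merge concatenates compactors level by level and re-compacts where capacity is exceeded, and each compaction still touches at least $k_h$ items, so Equation~\ref{nice1} survives unchanged. You instead collapse the light levels into a reservoir sampler, which is the paper's argument for the \emph{next} theorem (the one without the $+\log(\eps n)$ term) --- and the paper explicitly notes, immediately after that second theorem, that ``it is not clear how to merge the samplers in a correct way,'' and then devotes Section~\ref{reducing} to constructing a mergeable sampler with its own Hoeffding analysis (Equation~\ref{eq:nice_sampler}).

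Concretely, the step that fails in your merge procedure is ``push the lighter summary's low-weight items through the extra capacity-$2$ compactions needed to match the coarser rate.'' Once the light chain has been collapsed, what you actually hold is a single reservoir item carrying a partial weight $v<2^{H''}$ (plus a block counter). A capacity-$2$ compaction consumes two items of \emph{equal} weight; there is no well-defined way to feed one partially-weighted residual item through further compactions, nor to reconcile two such residuals held at different heights $H''_A\ne H''_B$. The paper's fix (Section~\ref{reducing}) is to define a weighted-update sampler with a careful probabilistic rule for combining items of unequal weight, and to reprove the variance bound for it; none of that is implicit in ``exactly a reservoir sampler with a block counter.'' As written, your argument proves the (weaker, non-mergeable) space bound of the second theorem but does not establish mergeability for this one. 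The quickest repair is to drop the sampler entirely for this theorem and keep the capacity-$2$ compactors, as the paper does; then the $+\log(\eps n)$ items are unavoidable, and your merge argument (concatenate level by level, re-compact, note each compaction still involves $\ge k_h$ items) is exactly right. Also note the smaller confusion: you account for the $\log(\eps n)$ term as counter \emph{bits}, but the theorem's space is measured in stored items, and the term in the paper's proof is genuinely $\Theta(H)$ stored items, not a counter.
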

\begin{proof}
Let $k_h = \ceil{k c^{H-h}}+1$. Note that $k_h$ changes throughout the run off the algorithm. Nevertheless, $H$ can only increase and so $k_h$ is monotonically decreasing with the length of the stream. 
This matches the requirement that $k_h \ge k c^{H-h}$ where $H$ is the final number of compactors.
Notice that $k_h$ is at least $2$.

Setting $H'=H$ in Equation~\ref{nice1} and requiring failure probability at most $\delta$ we conclude that it suffices to set $k = (C/\eps)\sqrt{\log(2/\delta)}$.
The space complexity of the algorithm is $O\left( \sum_{h=1}^{H} k_h \right)$.
\begin{eqnarray*}
\sum_{h=1}^{H} k_h \le \sum_{h=1}^{H} (k c^{H-h} + 2) \le k/(1-c) + 2H = O(k + \log(n/k))
\end{eqnarray*}
Setting $\delta = \Omega(\eps)$ suffices to union bound over the failure probabilities of $O(1/\eps)$ different quantiles.
This provides a mergeable sketching algorithm for the all quantiles problem of space $O((1/\eps)\sqrt{\log(1/\eps)} + \log(\eps n))$.

The KLL sketch provides a \emph{mergeable summary}.
In a merge operation, same height compactors are concatenated together.
Then, each level that contains more than $k_h$ elements is compacted. 
The value of $k_h$ is based on the new maximal height $H$ which is derived from the combined lengths of the two streams.
In either of the two merged sketches, each compaction at level $h$ involved at least $k_h$ items which means the proof above still holds.

\end{proof}
\noindent Note that this result already improves on the best known prior art in the parameter setting where $n = \exp(O(1/\eps))$.

\begin{theorem}\label{kll0}
There exists a streaming algorithm that computes an $\eps$ approximation for the rank of a single item with probability $1-\delta$ whose space complexity is $O\left((1/\eps)\sqrt{\log(1/\delta)} \right)$.
\end{theorem}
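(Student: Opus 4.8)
The plan is to reuse the algorithm and analysis behind the previous theorem and to remove only the additive $\log(\eps n)$ term, which came from the fact that each of the $H=O(\log(\eps n))$ compactors stores at least two items. With $k_h=\ceil{kc^{H-h}}+1$ the compactor at height $h$ has capacity exactly $2$ precisely when $kc^{H-h}\le 1$, i.e.\ when $h\le s$ where $s:=H-\log_{1/c}k$ up to rounding. The whole point is that the chain of these bottom capacity-$2$ compactors can be maintained in $O(1)$ space, so that the total space drops from $\sum_{h=1}^{H}k_h$ to $O(1)+\sum_{h=s+1}^{H}k_h=O(k)$.

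First I would record what a capacity-$2$ compactor does: it holds at most one item between arrivals, and when it compacts it sorts its two items and retains a uniformly random one of them with doubled weight (sorting two items and taking the odd, or even, one is just taking one of the two uniformly at random). Composing this over $s$ levels, I would argue by induction on $s$ that a chain of $s$ capacity-$2$ compactors emits, for every block of $2^s$ consecutive stream items, one item drawn uniformly at random from that block and carrying weight $2^s$, with the choices made in disjoint blocks independent; the inductive step is that a uniform choice out of two independent uniform choices from the two halves of a block is a uniform choice from the whole block. This emitted stream is exactly what a per-block reservoir sampler produces, and such a sampler needs only a running candidate and a counter, i.e.\ $O(1)$ words. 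Hence the stream fed into the first ``real'' compactor (height $s+1$) has the same distribution as in the previous theorem, so every downstream quantity, and in particular $\tilde{\rank}(x)$, has the same distribution --- with one caveat: at the end of the stream the bottom chain would be holding $<2^s$ units of weight, while the sampler holds only its single current candidate. Reporting that candidate weighted by the size of the current partial block changes the estimate by an additive, zero-mean quantity bounded by $2^s$. Since the top compactor exists we have $2^H\le 4n/(ck)$, hence $2^s=2^H k^{-\log_{1/c}2}=O\!\left(n/k^{\,1+\log_{1/c}2}\right)$, which for $k=\Omega(1/\eps)$ is both $o(\eps n)$ and $o(n/k)$; so this correction is negligible against the $\eps n$ budget and contributes only an $o(n^2/k^2)$ term to the sum of squared ranges, and inequality~(\ref{nice1}) with $H'=H$ continues to hold in the same form, $\Pr[|\tilde{\rank}(x)-\rank(x)|\ge\eps n]\le 2\exp(-C'\eps^2k^2)$ for a constant $C'>0$.

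I would then finish exactly as before. Choosing $k=\Theta((1/\eps)\sqrt{\log(1/\delta)})$ makes the failure probability at most $\delta$. For the space, the sampler costs $O(1)$, there are $H-s=O(\log k)$ real compactors, and their total capacity is
\[
\sum_{h=s+1}^{H}k_h\ \le\ \sum_{j=0}^{H-s-1}\big(kc^{j}+2\big)\ \le\ \frac{k}{1-c}+2(H-s)\ =\ O(k+\log k)\ =\ O(k),
\]
so together with the $O(\log n)$ bits used by the counters and to store $H$ (lower order, $O(1)$ words), the total space is $O(k)=O((1/\eps)\sqrt{\log(1/\delta)})$. Merging still only concatenates equal-height compactors and fuses the two samplers into one of the (higher) combined rate, so the summary remains mergeable.

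The step I expect to be the main obstacle is the second paragraph: showing carefully that the $O(1)$-space reservoir sampler is a faithful stand-in for the entire bottom chain, including the need to raise the sampling rate $s$ (by composing with one more capacity-$2$ stage) each time a new top compactor is created, and arguing that the sampler's residual state can be reported crudely without affecting correctness. Everything else is a direct re-use of the earlier analysis.
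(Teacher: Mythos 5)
Your proposal is correct and takes essentially the same approach as the paper: observe that the bottom compactors of capacity $2$ collectively behave as a uniform per-block sampler, replace them with an $O(1)$-space reservoir sampler, and note that the remaining compactors of capacity $>2$ have total capacity $O(k)$. The paper's own proof is terser (it defers the careful treatment of the sampler's residual state, weighted updates, and the dynamics of $H$ increasing to the subsequent section on mergeability), while you address those points inline; substantively the argument is the same.
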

\begin{proof}
The KLL sketch maintains a total of $H = \log(\eps n)$ sketches. 
Note, however, that only $O(\log(k))$ compactors have capacity greater than $2$.
More accurately the bottom $H'' =H - \ceil{\log(k)/\log(1/c)}$ all have capacity exactly $2$.
Each of those receives two items at a time, performs a random match between them, and sends the winner of the match to the compactor of the next level. Hence, the compactor of level $H''$ simply selects one item uniformly at random from every $2^{H''}$ elements in the stream and passes that item with weight $2^{H''}$ to the compactor at hight $H''+1$. 
This is easily simulated using $O(1)$ space which replaces the bottom $H''$ compactors and reduces the space complexity of the algorithm.
\end{proof}

There is, however, a drawback in replacing the bottom $H''$ compactors with a simple sampler.
When merging two sketches, it is not clear how to merge the samplers in a correct way.
The next section explains how to do exactly that.

\section{Sampling and Keeping Mergeablity}\label{reducing}



In the new sketch we have, in addition to the compactors, a new object we call a sampler. 
The sampler supports an $\func{update}$ method that introduces an item of weight $w$ to the sketch. When observing items in a stream the weight is always set to $1$. However, when merging two sketches we require supporting an update of arbitrary weights.
%
At any time the sampler has an associated height $h$. 
It outputs items of weight $2^h$ as inputs to the compactor of level $h+1$.
This associated height will increase over time to eventually being roughly $H-\log(1/\eps)$.
%
%
Apart for sampler of height $h$, the sketch maintains compactors at heights greater than $h$.

The sampler keeps a single item in storage along with a weight of at most $2^h-1$. When merging two sketches, the sketch with the sampler of smaller height will feed its item with its appropriate weight to the sampler of the other sketch. Also, all compactors with height $\leq h$ in the `smaller' sketch will feed the items in their buffers to the sampler of the `larger sketch'.

The $\func{update}$ operation is performed as follows. Denote by $v$ the weight of the internal item stored in the sampler and by $w$ the weight of the newly introduced item. If $v+w \leq 2^{h}$, the sampler replaces its stored item with the new one with probability $w/(v+w)$ as in Reservoir Sampling. 
If $v+w = 2^h$ the sampler outputs the stored item and sets the internal weight $w$ to $0$. If however $v+w > 2^h$ (notice this can only happen if $w>1$) the sampler discards the heavier item, and keeps the lighter item with a weight of $\min\{w,v\}$. With probability $\max\{w,v\}/2^h$ it also outputs the heavier item with weight $2^h$. 

It is easy to verify that the above described sketching scheme corresponds to the following offline operations. 
The sampler outputs items by performing the action \emph{sample}; 
it takes as input a sequence of items $W$ items such that $2^{h-1} < W \leq 2^h$.
With probability $W/2^h$ it outputs one of the observed items chosen at random.
With probability $1-W/2^h$ it outputs nothing. 
Before analyzing the error associated with a \emph{sample} operation we mention that, if no merges are performed, it suffices to restrict the value of $W$ to be exactly $2^h$. 
However, in order to account for merges we must let $W$ obtain values in the range $W \in (2^{h-1},  2^h]$.

\begin{lemma}
Let $\rank(x,h,i)$ be the rank of $x$ after sample operation $i$ of the sampler of height $h$, where the rank is computed based on the stream elements that did not undergo a sample operation, and the weighted items outputted by the sample operations $1$ through $i$.
Let $\rank(x,h,i) - \rank(x,h,i-1) = 2^h Y_{h,i}$. Then $\E[Y_{h,i}] = 0$ and $|Y_{h, i}| \le 1$.
\end{lemma}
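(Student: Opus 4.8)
The plan is to condition on the entire history of the sketch up to the moment sample operation $i$ of the height-$h$ sampler begins, so that the batch $B$ of items consumed by that operation is fixed and the only randomness left is the fresh coin used inside the operation; the lemma then reduces to a one-line expectation computation. Write $W$ for the total weight of $B$ (by the construction of the sampler, $2^{h-1} < W \le 2^h$), and let $\rho$ be the total weight of the items of $B$ whose value is at most $x$, so $0 \le \rho \le W$. The counts $\rank(x,h,i)$ and $\rank(x,h,i-1)$ are computed over multisets that agree on every item except the items of $B$ (present in the $(i-1)$-st count, absent from the $i$-th) and the output of operation $i$ (present in the $i$-th count only). Hence $\rank(x,h,i) - \rank(x,h,i-1) = Z - \rho$, where $Z$ is the contribution of operation $i$'s output to the rank of $x$.

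By the offline semantics of the \emph{sample} operation, with probability $1 - W/2^h$ the output is empty and $Z = 0$, and with probability $W/2^h$ the output is a single item of weight $2^h$ drawn from $B$ with probability proportional to its weight; conditioned on the latter event the drawn item has value at most $x$ with probability exactly $\rho/W$, so $Z = 2^h$ then and $Z = 0$ otherwise. Therefore $\E[Z \mid \text{history}] = (W/2^h)\cdot(\rho/W)\cdot 2^h = \rho$, whence $\E[\rank(x,h,i) - \rank(x,h,i-1) \mid \text{history}] = \rho - \rho = 0$, and taking expectations gives $\E[Y_{h,i}] = 0$; in fact this argument shows the $Y_{h,i}$ form a martingale difference sequence, which is the form actually needed when Hoeffding/Azuma is later applied to the sampler's contribution. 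For the magnitude bound, observe $Z \in \{0, 2^h\}$ and $\rho \in [0,W] \subseteq [0,2^h]$, so $Z - \rho \in [-2^h, 2^h]$; dividing by $2^h$ yields $|Y_{h,i}| \le 1$.

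The one step that requires genuine care is the very first claim of the second paragraph: that the online $\func{update}$ rule — including its behavior across merges and in the overshoot branch $v+w > 2^h$ — actually realizes the stated offline \emph{sample} semantics, namely that the item held in the reservoir is at all times distributed as a weight-proportional sample of the items absorbed since the last emission, that the stored weight faithfully tracks the accumulated weight modulo $2^h$, and that each emitted item has weight exactly $2^h$ and is emitted with the correct probability independently of which item it is. I would prove this by induction over the sequence of updates, treating the three branches $v+w<2^h$, $v+w=2^h$, $v+w>2^h$ separately and using that every item fed to a height-$h$ sampler has weight at most $2^{h-1}$ (it is either a stream item of weight $1$ or an item dumped from a compactor of height at most $h$), which is what pins each batch's total weight into the range $(2^{h-1}, 2^h]$. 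Once this equivalence is established the lemma is immediate from the computation above, and I expect the bookkeeping for the merge/overshoot branch to be the only real obstacle.
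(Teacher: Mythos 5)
Your argument is the same as the paper's: condition on the batch consumed by operation $i$, observe the output has the value at most $x$ contribution $2^h$ with probability $(W/2^h)(\rho/W)$ and $0$ otherwise, so its expectation equals $\rho$ (what the paper calls $r$), giving zero expected change, and the difference is trivially in $[-2^h,2^h]$. The only divergence is that you explicitly flag the online-to-offline equivalence of the $\func{update}$ rule as needing an inductive verification, whereas the paper dismisses it with ``it is easy to verify''; your martingale-difference remark is also a correct and useful observation for the later Hoeffding application, but neither changes the substance of the proof.
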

\begin{proof}
Let $r$ be the exact rank of $x$ among the input items before the sampling operation. Denote by $W$ the sum of weights of the input items.
After the sampling, those input items are replaced with a single element. 
The rank of $x$ after the sampling is $2^h$ with probability $\frac{W}{2^h}\cdot \frac{r}{W}$ and $0$ otherwise. 
The first term is the probability of outputting anything. The second is of selecting an element smaller than $x$.
Therefore, the expected value of the rank of $x$ after the sample operation is $2^h\frac{W}{2^h}\cdot \frac{r}{W}  = r$, hence the expected value of the difference mentioned in the claim is $0$.
Clearly, the maximal value of this difference is bounded by $2^h$.
\end{proof}

Let $m_h$ be total number of times a \emph{sample} operation can be performed at height $h$.
Since the sampler at that height takes items with a total minimum weight of $W > 2^{h-1}$ and there are a total of $n$ items (with overall weight $n$)
$$ m_h \leq \frac{n}{2^{h-1}} \ .$$
It follows that the expression of the error accounted for samplers of heights up to\footnote{Notice that unlike compactors, there is no hierarchy of samplers. However, due to the fact that the height of the sampler grows with time, the sample operations may be performed on different heights. The only guarantee is that any item appears in at most one sample operation and that the height of the sampler is always at most $H''$.} $H''$ can be expressed as 
\begin{equation*}
\var{err}_{H''} =  \sum_{h=1}^{H''} \sum_{i=1}^{m_h}[\rank(x,h,i) - \rank(x,h,i-1)] = \sum_{h=1}^{H''} \sum_{i=1}^{m_h} 2^{h} Y_{i,h}
\end{equation*} 
with $Y_{i,h}$ being independent, $\E[Y_{i,h}]=0$ and $|Y_{i,h}| \le 1$. We compute the sum of weights appearing in Hoeffding's inequality (Lemma~\ref{lem:hoeffding}) in order to apply it.
$$ \sum_{h=1}^{H''} \sum_{i=1}^{m_h} 2^{2h} \leq \sum_{h=1}^{H''} n 2^{h+1} \leq 4n 2^{H''} = \frac{4 n 2^H}{2^{H-H''} } \leq \frac{16 n^2 }{ck 2^{H-H''} }$$
Hence,
\begin{equation} \label{eq:nice_sampler}
\Pr\left[ \var{err}_{H''} > \eps n \right] \leq 2\exp\left(- c \eps^2 k 2^{H-H''}/32 \right) \ .
\end{equation}

The following is immediate from Equations~\ref{nice1} and~\ref{eq:nice_sampler}.
\begin{theorem} \label{thm:space_saved}
Assume we apply a KLL sketch that uses $H$ levels of compactors, with capacity $k_h \geq k c^{H-h}$. Also, assume that an arbitrary subset of the stream is fed into samplers of heights $1$ through $H''$, while the output of these samplers is fed to appropriate compactors.  Then for any $H' > H''$ it holds that
$$\Pr\left[ \var{err}_{H'} > 2\eps n \right] < 2\exp\left(- c \eps^2 k 2^{H-H''}/32 \right) + 2\exp \left( -C\eps^2 k^2 2^{2(H-H')} \right)$$
Here, $\var{err}_{H'}$ denotes the error of the stream outputted by the compactors of level $H'$, and $C = c^2(2c-1)$.
\end{theorem}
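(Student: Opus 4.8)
The plan is to decompose $\var{err}_{H'}$ into the contribution of the samplers (heights $1,\dots,H''$) and the contribution of the compactors that actually store items (heights $H''+1,\dots,H'$), and to bound the two pieces separately with the two tail estimates already established. Writing the telescoping identity with the sampled prefix of the hierarchy grouped apart from the compacted suffix gives
\[
\var{err}_{H'} \;=\; \var{err}_{H''} \;+\; \sum_{h=H''+1}^{H'}\sum_{i=1}^{m_h} w_h X_{i,h},
\]
where $\var{err}_{H''}=\sum_{h=1}^{H''}\sum_{i=1}^{m_h} 2^{h} Y_{i,h}$ is exactly the sampler error analysed above, and all the $Y_{i,h}$ and $X_{i,h}$ are independent, mean zero, and bounded by $1$ in absolute value, since the samplers and the compactors draw from disjoint pools of random bits.

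I would then invoke Equation~\ref{eq:nice_sampler} verbatim for the first summand, giving $\Pr[\var{err}_{H''}>\eps n]\le 2\exp(-c\eps^2 k 2^{H-H''}/32)$. For the second summand I would note that its Hoeffding variance proxy $\sum_{h=H''+1}^{H'}\sum_{i=1}^{m_h} w_h^2$ is only a sub-sum of the quantity bounded in Equation~\ref{sumw}, hence is still at most $\frac{n^2/k^2}{2c^2(2c-1)}\,2^{-2(H-H')}$; plugging this into Lemma~\ref{lem:hoeffding} reproduces the computation leading to Equation~\ref{nice1} and yields $\Pr[|\sum_{h=H''+1}^{H'}\sum_{i=1}^{m_h} w_h X_{i,h}|\ge\eps n]\le 2\exp(-C\eps^2 k^2 2^{2(H-H')})$ with $C=c^2(2c-1)$. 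A union bound over the two bad events, together with the triangle inequality, then finishes the argument: off both bad events, $|\var{err}_{H'}|\le \eps n+\eps n = 2\eps n$.

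The only step that is not purely mechanical is re-checking the bound $m_h\le n/(k_h w_h)$ in the present, more general setting, where a compactor at height $h$ no longer sees the raw stream but instead the output of a sampler together with an arbitrary injection of items (as happens during a merge). This still holds because the total weight that can ever reach height $h$ is at most $n$: a sampler never manufactures weight and, deterministically, outputs no more weight than it consumed, so the number of compact operations at height $h$, each consuming at least $k_h$ items of weight $w_h$, is at most $n/(k_h w_h)$. Once this weight-conservation observation is in place the two tail bounds apply unchanged and the routing of items into the hierarchy plays no further role, so I expect this check to be the main (and only mild) obstacle.
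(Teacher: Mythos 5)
Your proof is correct and is essentially the argument the paper has in mind when it says the theorem is ``immediate from Equations~\ref{nice1} and~\ref{eq:nice_sampler}'': decompose $\var{err}_{H'}$ into the sampler contribution (levels $1,\dots,H''$) and the compactor contribution (levels $H''+1,\dots,H'$), apply the two already-derived tail bounds, and union bound. You merely spell out the details the paper leaves implicit, including the sanity check that $m_h \le n/(k_h w_h)$ survives the introduction of samplers.
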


By taking $H'=H$, $H''=H-O(\log(k))$, and $k=(1/\eps)\sqrt{\log(1/\delta)}$ we obtain the following corollary.

\begin{corollary}\label{kll1}
There exists a streaming algorithm that computes an $\eps$ approximation for the rank of a single item with probability $1-\delta$ whose space complexity is $O((1/\eps)\sqrt{\log(1/\delta)})$. This algorithm also produced mergeable summaries.
\end{corollary}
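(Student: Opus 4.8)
The plan is to instantiate Theorem~\ref{thm:space_saved} with the parameters announced just above the corollary and then bound the space of the resulting sketch. Fix $c=2/3$, so that $C=c^2(2c-1)$ is an absolute constant, and keep the capacity schedule $k_h=\ceil{k c^{H-h}}+1$ used in the previous proofs; recall this makes $k_h=2$ exactly for the levels $h\le H'':=H-\ceil{\log(k)/\log(1/c)}$. I would replace the bottom $H''$ capacity-$2$ compactors by a single sampler of (growing) height $H''$ governed by the $\func{update}$ rule of Section~\ref{reducing}. This is legitimate because a chain of $H''$ capacity-$2$ compactors is exactly the sampler that, out of every block of $2^{H''}$ items, emits one uniformly at random with weight $2^{H''}$, and the $\func{update}$ rule faithfully simulates this while additionally supporting merges; the cost of the sampler is $O(1)$.

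For correctness I would apply Theorem~\ref{thm:space_saved} with $H'=H$ (so $H'>H''$ as long as $k\ge 2$). The second summand of the bound is then $2\exp(-C\eps^2 k^2)$, since $2^{2(H-H')}=1$. For the first summand observe $2^{H-H''}\ge 2^{\log(k)/\log(1/c)}=k^{1/\log_2(1/c)}\ge k$, because $\log_2(1/c)=\log_2(3/2)\in(0,1)$; hence the first summand is at most $2\exp(-c\eps^2 k^2/32)$. Choosing $k=\Theta\!\left((1/\eps)\sqrt{\log(1/\delta)}\right)$ with a large enough constant (depending only on $c$) makes each of the two summands at most $\delta/2$, so
\[
\Pr\!\left[\, \var{err}_{H} > 2\eps n \,\right] \le \delta .
\]
Running the whole construction with accuracy parameter $\eps/2$ (which only rescales $k$ by a constant) then yields additive error at most $\eps n$ on the queried rank with probability $1-\delta$.

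For the space bound, the state consists of the $O(1)$-size sampler plus the compactors at heights $H''+1,\dots,H$, whose total capacity is
\[
\sum_{h=H''+1}^{H} k_h \;\le\; \sum_{j\ge 0}\bigl(k c^{j}+2\bigr) + 2(H-H'') \;\le\; \frac{k}{1-c} + O(\log k) \;=\; O(k),
\]
since the number of surviving compactors is $H-H''=\ceil{\log(k)/\log(1/c)}=O(\log k)$ and $\log k=O(k)$. (If $n$ is so small that $H\le\ceil{\log(k)/\log(1/c)}$, then $H''\le 0$, no sampler is used, and the same estimate gives $\sum_{h=1}^{H}k_h=O(k+H)=O(k)$, so there is no residual $\log(\eps n)$ term.) Hence the space is $O(k)=O\!\left((1/\eps)\sqrt{\log(1/\delta)}\right)$. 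Mergeability is inherited from the components: on a merge, the two samplers are combined as in Section~\ref{reducing}, same-height compactors are concatenated, and overflowing levels are compacted with $k_h$ recomputed from the new height $H$; every compaction at level $h$ still operates on at least $k_h\ge k c^{H-h}$ items and every \emph{sample} operation still has total weight in $(2^{h-1},2^h]$, so the hypotheses of Theorem~\ref{thm:space_saved} remain satisfied for the merged summary.

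I expect the only subtle point to be the choice of $H''$: it must be small enough that every compactor below it has capacity $2$ (so the chain is genuinely a uniform sampler and runs in $O(1)$ space), yet large in the sense that $2^{H-H''}$ is at least polynomially large in $k$ — concretely $\Omega(k)$ — so that the sampler's contribution $2\exp(-c\eps^2 k 2^{H-H''}/32)$ to the failure probability is no larger than the compactors' contribution $2\exp(-C\eps^2 k^2)$. Keeping $c$ bounded away from both $1/2$ and $1$ (e.g.\ $c=2/3$) makes $H''=H-\Theta(\log k)$ hit this sweet spot simultaneously; everything else is the routine parameter substitution and bookkeeping carried out above.
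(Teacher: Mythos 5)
Your proof is correct and follows exactly the approach the paper intends: the paper's own proof of this corollary is the one-line instruction to take $H'=H$, $H''=H-O(\log k)$, and $k=\Theta\bigl((1/\eps)\sqrt{\log(1/\delta)}\bigr)$ in Theorem~\ref{thm:space_saved}, and you carry out precisely that substitution, check that $2^{H-H''}\ge k$ so both tail terms are $\exp(-\Omega(\eps^2k^2))$, and bound the space by $O(k)$. (One cosmetic slip: the display $\sum_{j\ge 0}(kc^j+2)$ diverges as written; you clearly intend $\sum_{j=0}^{H-H''-1}kc^j+2(H-H'')\le k/(1-c)+2(H-H'')$, which is the bound you then use.)
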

%

\section{Reducing the Failure Probability}\label{reducing1}

In this section we take full advantage of Theorem~\ref{thm:space_saved} to obtain a streaming algorithm with asymptotically better space complexity. 
Notice that the lion share of the contribution to the error is due to the top compactors.
For those, however, Hoeffding's bound is not tight. Let $s=O(\log\log(1/\delta))$ be a small number of top layers. 
For the bottom $H-s$ layers we use Theorem~\ref{thm:space_saved}, applied on $H'=H-s$ and corresponding $H''$, to bound their error. For the top $s$ we simply use a deterministic bound.

\begin{theorem}\label{kll2}
There exists a streaming algorithm that computes an $\eps$ approximation for the rank of a single item with probability $1-\delta$ whose space complexity is $O((1/\eps)\log^2\log(1/\delta))$. This algorithm also produces mergeable summaries.
\end{theorem}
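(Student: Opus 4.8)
\emph{Proof proposal.} The plan is to reuse the mergeable sampler-plus-compactors architecture of Corollary~\ref{kll1}, but with a hybrid capacity schedule: I would let the top $s := \Theta(\log\log(1/\delta))$ compactors share a common capacity $k := \Theta(s/\eps) = \Theta((1/\eps)\log\log(1/\delta))$, let the compactors at heights $1,\dots,H-s$ follow the geometric schedule $k_h = \ceil{k\,c^{(H-s)-h}}+1$ (so the band tops out at capacity $\approx k$ at height $H-s$ and shrinks downward), and fold every resulting capacity-$2$ compactor into a single $O(1)$-space sampler exactly as in Section~\ref{reducing}. The first thing to check is that this schedule is monotone in $h$ and satisfies $k_h \ge k\,c^{H-h}$ throughout the run (trivial for $h>H-s$; for $h\le H-s$ it holds because $c^{(H-s)-h}\ge c^{H-h}$), so that Theorem~\ref{thm:space_saved} applies to it unchanged. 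Granting this, the space bound is immediate: the top band costs $sk$, among heights $1,\dots,H-s$ only $O(\log k)$ compactors exceed capacity $2$ and they sum to $O(k)$, and the rest is the sampler, for a total of $O(sk+k)=O((1/\eps)\log^{2}\log(1/\delta))$.

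For correctness I would fix a query $x$ and split $\rank(x,H)-\rank(x)=\var{err}_{H-s}+\sum_{h=H-s+1}^{H}\var{err}(x,h)$. The top sum I would bound deterministically in the MRL style: level $h$ runs $m_h\le n/(k_h w_h)$ compactions, each moves the estimate of a fixed $x$ by at most $w_h$, so the level contributes at most $n/k_h=n/k$, and the $s$ top levels contribute at most $sn/k\le\eps n$ once $k\ge s/\eps$. The term $\var{err}_{H-s}$ I would control by invoking Theorem~\ref{thm:space_saved} with $H'=H-s$ and $H''=(H-s)-\Theta(\log k)$ (the sampler cutoff), giving $\Pr[\var{err}_{H-s}>2\eps n]\le 2\exp(-c\eps^{2}k\,2^{H-H''}/32)+2\exp(-C\eps^{2}k^{2}2^{2s})$. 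Since $2^{H-H''}=2^{s}\cdot 2^{(H-s)-H''}$ and $2^{(H-s)-H''}\ge k$, both exponents are $\Omega((\eps k)^{2}2^{s})=\Omega((\log\log(1/\delta))^{2}\,2^{s})$ (the factor $k$ in the first term is what cancels the last stray $\eps$), so choosing the constant in $s$ large enough that $2^{s}\ge\log(1/\delta)$ drives each $\exp(\cdot)$ term below $\delta/4$. Hence $\Pr[|\rank(x,H)-\rank(x)|>3\eps n]\le\delta$, and rescaling $\eps$ by a constant gives the single-quantile claim; the all-quantiles version comes, as usual, from running at failure probability $\eps\delta$ and union-bounding over $O(1/\eps)$ queries, turning $\log\log(1/\delta)$ into $\log\log(1/\eps\delta)$.

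For mergeability I would follow Theorem~\ref{kll0} and Corollary~\ref{kll1} verbatim: a merge concatenates equal-height compactors, merges the two samplers as in Section~\ref{reducing}, recomputes $H$ from the combined stream length, recomputes every $k_h$ from the new $H$, and compacts any over-capacity level. The one invariant the error analysis needs is that every compaction ever performed at height $h$ acted on at least the \emph{final} value of $k_h$; I would argue this by noting that each $k_h$ is nonincreasing as $H$ grows, including across the moment a level slides out of the top band (capacity $k$) and into the geometric band (capacity $\le k$).

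The step I expect to be the main obstacle is exactly this interface bookkeeping, in two places. Statically, I must be sure the hybrid $k_h$ schedule never violates the hypothesis $k_h\ge k\,c^{H-h}$ of Theorem~\ref{thm:space_saved} and stays monotone despite the ceilings and ``$+1$''s; under merges, I must be sure the same stays true as $H$ jumps. On the analysis side the only delicate point is checking that the constant hidden in $s$ can be taken independent of $\eps$ — that is, that the leftover $\eps$-dependence inside the first exponential is genuinely killed by the extra factor $2^{(H-s)-H''}\ge k\ge1$ rather than requiring $s$ to grow with $1/\eps$. Everything past that is the routine substitution already packaged in Theorem~\ref{thm:space_saved} and Equation~\ref{nice1}.
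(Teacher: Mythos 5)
Your proposal is correct and follows essentially the same route as the paper: constant capacity $k$ for the top $s=\Theta(\log\log(1/\delta))$ compactors with the deterministic MRL-style bound $sn/k\le\eps n$, Theorem~\ref{thm:space_saved} for the rest with $H'=H-s$ and the sampler cutoff at $H''$, and $k=\Theta((1/\eps)\log\log(1/\delta))$, giving space $O(ks)=O((1/\eps)\log^2\log(1/\delta))$. One small slip worth noting: your hybrid schedule $k_h=\ceil{k\,c^{(H-s)-h}}+1$ for $h\le H-s$ jumps from $k$ to $k+1$ at the moment a level slides out of the top band, so ``$k_h$ is nonincreasing in $H$'' is not literally true at that boundary; this does not break the argument, though, because the observed per-compaction minimum still satisfies $k_h\ge k\,c^{H-h}$, which is all Theorem~\ref{thm:space_saved} requires.
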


\begin{proof}
Using Theorem~\ref{thm:space_saved} we see that the bottom compactors of height at most $H'=H-s$ and the sampler, when set to be of height at most $H''=H-2s-\log_2(k)$, contribute at 
most $\eps n$ to the error with probability $1-\delta$ at long as $\eps k 2^s \ge c' \sqrt{\log(2/\delta)}$ for sufficiently small constant $c'$.
For the top $s$ compactors, we set their capacities to $k_h = k$. 
That is, we do not let their capacity drop exponentially.
Those levels contribute to the error at most $\sum_{h=H'+1}^{H} m_h w_h = \sum_{h=H'+1}^{H} n/k_h = sn/k$.
Requiring that this contribution is at most $\eps n$ as well we obtain the relation $s \le k \eps$.
Setting $s = O(\log \log (1/\delta))$ and $k = O(\frac{1}{\eps}\log \log (1/\delta))$ satisfies both conditions.
The space complexity of this algorithm is dominated by maintaining the top $s$ levels which is $O(ks) = O((1/\eps)\log^2\log(1/\delta))$. 
\end{proof}

Interestingly, the analysis of the top $s$ levels is identical to the equal capacity compactors used in the MRL sketch. 
In the next section we show that one could replace the top $s$ levels with a different algorithm and reduce the dependence on $\delta$ even further. 

\subsection{Gaining Space Optimality; Potentially Losing Mergeability}\label{reducing2}
The most space efficient version of our algorithm, with respect to the failure probability, operates as follows.
For $\delta$ being the target error probability we set 
$$s = \ceil{\log_2\left( c' \sqrt{\log(2/\delta)} / (k\eps) \right)} = O(\log \log (1/\delta))$$ 
as in the section above but set $k = O(1/\eps)$.
At any time point we keep 2 different copies of the GK sketch, tuned for a relative error of $\eps$. They are correspondingly associated with the compactors of heights $h_1<h_2$ which are the two largest height values that are multiples of $s$. The GK sketch associated with height $h$ receives as input the outputs of the compactor of layer $h-1$. For $h=0$ the GK sketch associated with it receives as input the stream elements. When a new GK sketch is built due to a new compactor being formed  the bottom one is discarded.

\begin{theorem}\label{kll3}
There exists a streaming algorithm that computes an $\eps$ approximation for the rank of a single item with probability $1-\delta$ whose space complexity is $O((1/\eps)\log\log(1/\delta))$.
\end{theorem}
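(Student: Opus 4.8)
The plan is to combine three already-established pieces: the sampler analysis (Equation~\ref{eq:nice_sampler}), the bottom-compactor Hoeffding bound (Equation~\ref{nice1}), and the correctness guarantee of GK. The architecture is the one described above: a sampler of height $H''$, then varying-capacity compactors with $k_h \geq k c^{H-h}$ up to height $H' = h_2$, and on top a GK sketch fed by the output of compactor $h_2 - 1$ (with a second GK kept around for mergeability of the sampler's growing height / for the discard step). The error decomposes additively into three terms: the sampler error $\var{err}_{H''}$, the compactor error on levels $H''{+}1,\ldots,H'$, and the error of the GK sketch running on the weighted stream that emerges from level $h_2 - 1$. I would allocate $\eps n$ worth of error to each of the first two (controlled by Theorem~\ref{thm:space_saved} with failure probability $\delta$), and a separate $O(\eps n)$ to GK.

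First I would pin down the parameters. Take $k = O(1/\eps)$ and $s = \ceil{\log_2(c'\sqrt{\log(2/\delta)}/(k\eps))} = O(\log\log(1/\delta))$, so that the condition $\eps k 2^{s} \geq c'\sqrt{\log(2/\delta)}$ from the proof of Theorem~\ref{kll2} holds; this makes the bottom $H' = H - s$ compactors plus the sampler contribute at most $\eps n$ with probability $\geq 1 - \delta$. Then I would argue about the top part: the GK sketch associated with height $h_2$ (the largest multiple of $s$ that is $\leq H'$, so $H - h_2 = O(s) = O(\log\log(1/\delta))$) receives a weighted stream of total weight $\leq n$ in which every item has weight $w_{h_2} = 2^{h_2 - 1}$, hence at most $n / 2^{h_2-1} = O(2^{H - h_2}\, n / 2^{H-1}) = O(\poly(1/\eps,\log\log(1/\delta)) / \eps \cdot \ldots)$ — the point being that the number of distinct (weighted) items it ever sees is $O((1/\eps)\poly\log(1/\delta))$, so a GK sketch tuned to relative error $\eps$ uses space $O((1/\eps)\log(\eps \cdot \#\text{items})) = O((1/\eps)\log\log(1/\delta))$ and introduces at most $\eps \cdot (\text{total weight}) \leq \eps n$ additional rank error, deterministically. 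Summing the three error contributions gives $O(\eps n)$, which is rescaled to $\eps n$ by adjusting constants; the total failure probability is $\delta$.

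For the space accounting I would note: the sampler is $O(1)$; the varying-capacity compactors below height $H'$ sum to $\sum_h k_h \leq k/(1-c) = O(1/\eps)$ by the geometric series as in Theorem~\ref{kll0}; and the (at most two) GK instances cost $O((1/\eps)\log\log(1/\delta))$ each by the computation above. Hence the total is $O((1/\eps)\log\log(1/\delta))$, dominated by the GK part. The handling of the "discard the bottom GK when a new compactor forms" and the second GK copy is exactly as in the paragraph preceding the theorem statement; I would only need to check that at the moment of a swap the newly-promoted GK has already been fed the correct prefix of the (weighted) stream, which is true because it was being fed in parallel all along.

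The main obstacle I expect is making the GK-input bound fully rigorous across the algorithm's dynamic behavior: $H$ grows over time, $h_2$ changes, and the weights output by compactor $h_2 - 1$ are not all equal to $2^{h_2-1}$ at intermediate times nor after merges (the sampler and compactors can emit lighter-weight residual items). I would address this by invoking the same device used for the sampler in Section~\ref{reducing} — treat each GK instance as consuming a weighted stream of \emph{total} weight $\leq n$ with per-item weights in $(2^{h_2-2}, 2^{h_2-1}]$, so the item count is still $O(n/2^{h_2-1})$ and GK's guarantee (additive $\eps$ times total weight) still applies — together with the observation that $H - h_2 = O(\log\log(1/\delta))$ so that $2^{H-h_2} = \poly\log(1/\delta)$ and $n/2^{h_2-1} = O((1/\eps)\poly\log(1/\delta))$. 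The only genuinely new ingredient beyond Theorem~\ref{kll2} is this quantitative control on how many items reach GK and the fact that GK contributes error proportional to the weight it sees rather than to $n$ directly; everything else is bookkeeping on top of Theorem~\ref{thm:space_saved}.
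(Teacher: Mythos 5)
Your proposal is correct and follows essentially the same route as the paper's proof: apply Theorem~\ref{thm:space_saved} at $H'=H-s$ with $k=O(1/\eps)$ and $s=O(\log\log(1/\delta))$ to bound the sampler/compactor error, then observe that the active GK instance sees only $O(k\,2^{O(s)})$ heavy items, so its $O((1/\eps)\log(\eps n_1))$ footprint collapses to $O((1/\eps)\log\log(1/\delta))$. The dynamic-height and two-copy bookkeeping you flag are handled (tersely) in the paper via the bound $h_1\geq H-2s$; your error decomposition and space accounting otherwise coincide with the paper's.
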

\begin{proof}
Notice that the height of $h_1$ is at least $H-2s$. 
It follows that the total number of items that is ever fed into a single GK sketch at most $n_{1} = k 2^{2s}$.
Applying Theorem~\ref{thm:space_saved}  again, on $H' =H-s$, and $H''=H-2s-\log_2(k)$, the error w.r.t.\ to the output of the compactor feeding elements into the GK sketch matching $h_1$ is at most $O(\eps n)$, with $k=O(1/\eps)$. 
Therefore, the sum of errors is still $O(\eps n)$. The memory required by the GK sketch with respect to its input is at most 
$O((1/\eps) \log (\eps n_{1})) = O\left( (1/\eps) s \right) = O\left( (1/\eps) \log \log(1/\delta) \right)$, which dominates the memory of our sketch with $k=O(1/\eps)$. The claim follows.
\end{proof}

We note that the $GK$ sketch is not known to be fully mergeable and so that property of the sketch is lost by this construction. That being said, we point out that the $GK$ sketch is \emph{one-way mergeable}. One-way mergeability is a weaker form of mergeability that informally states that the following setting can work: The data is partitioned among several machine, each creates a summary of its own data, and a single process merges all of the summaries into a single one. For example, stated in terms of Definition~\ref{def:merge}, when the error $\eps(S(N),N)$ is linear in the size of the sketch $N$, i.e., $\eps(S(N),N) = \eps_S |N|$,
a merge operation resulting in an error of $\eps_S |N_1| + 2\eps_S |N_2|$ rather than $\eps_S |N_1| + \eps_S |N_2|$ is one-way mergeable, while it is not (fully) mergeable. It was pointed out by~\cite{GK2016,Agarwal12} that any sketch for the all quantiles approximation problem is one-way mergeable.

\section{Tightness of Our Result}

In \cite{hung2010omega} a lower bound is given for deterministic algorithms. A more precise statement of their result, implicitly shown in the proof of their Theorem 2, is as follows
\begin{lemma} [Implicit in \cite{hung2010omega}, Theorem 2] \label{lem:lb_specific}
Let $\cal A_D$ be deterministic comparison based algorithm solving single quantile $\eps$ approximate for all streams of length at most $C (1/\eps)^2 \log(1/\eps)^2$ for some sufficiently large universal constant $C$. 
Then $\cal A_D$ must store at least $c (1/\eps) \log(1/\eps)$ elements from the stream for some sufficiently small constant $c$. 
\end{lemma}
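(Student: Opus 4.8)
The plan is an adversary (indistinguishability) argument that crucially uses both hypotheses on $\mathcal{A}_D$: it is \emph{deterministic} and it is \emph{comparison based}. Because $\mathcal{A}_D$ only compares items, its internal state after any prefix — and hence its eventual answer on a query $q$ — is a function of the \emph{relative order} of the items it has seen, never of their actual values. So the adversary feeds items by committing only to where each new item falls among the items $\mathcal{A}_D$ currently \emph{stores}, leaving the real values (and the value of $q$) to be pinned down at the very end. The aim is to reach a configuration in which two legal value-assignments, both consistent with everything $\mathcal{A}_D$ observed, give true ranks of $q$ differing by more than $2\eps n$, so that the single value $\mathcal{A}_D$ outputs is off by more than $\eps n$ on one of them.

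The hard instance is a ``zoom sequence'': nested intervals $I_0 \supset I_1 \supset \cdots \supset I_R$ with $R = \Theta(\log(1/\eps))$. Here $I_0$ is the whole line; having fixed $I_{j-1}$, split it into $b := \Theta(1/\eps)$ consecutive equal cells, pour $\Theta(\eps n/R)$ fresh items into each cell (this is ``round $j$''; it inserts $\Theta(n/R)$ items, so after $R$ rounds the stream has length $\Theta(n)$, and we may take $n$ anywhere up to $C(1/\eps)^2(\log(1/\eps))^2$, padding if needed), then let $\mathcal{A}_D$ react and \emph{choose} $I_j$ to be a cell of the level-$j$ partition sitting in the middle of a maximal run of level-$j$ cells none of which holds an item that $\mathcal{A}_D$ stores right after round $j$. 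Since rounds $j+1,\dots,R$ insert items only inside $I_j$, no new stored item can appear in the other cells of that run, so through the end of the stream the run remains free of stored items within $I_{j-1}\setminus I_j$. Finally, place the query $q$ inside $I_R$.

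The heart of the proof is an additive ``resolving budget'' lower bound. To answer $q$ within $\eps n$, $\mathcal{A}_D$ must, for every level $j$, essentially know how the round-$j$ items deposited across the $b$ cells of the level-$j$ partition split to the two sides of $I_j$ — but which cell becomes $I_j$, hence which side of $q$ those items ultimately land on, is revealed only after round $j$, so $\mathcal{A}_D$ has to retain \emph{through the end} enough stored items in $I_{j-1}\setminus I_j$ to pin down that split. If it keeps fewer than $b/100$ there, the run around $I_j$ has $\Omega(1/\eps)$ empty cells, and $\Omega(\eps n/R)$ of round $j$'s mass is left ambiguously placed relative to $q$. Now observe that the regions $I_{j-1}\setminus I_j$, for $j=1,\dots,R$, are pairwise disjoint, so these per-level budgets \emph{add}: if $\mathcal{A}_D$ stores fewer than $c\,(1/\eps)\log(1/\eps)$ items in total, then at least $R/2$ of the levels are ``under-resolved,'' each contributing $\Omega(\eps n/R)$ to the ambiguity, for a total ambiguity $\Omega(\eps n)$ which, after fixing the constants, exceeds $2\eps n$. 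Assigning real values so as to push all the ambiguous mass of every under-resolved level below $q$ (versus all above $q$) yields two legal streams — both of length at most $C(1/\eps)^2(\log(1/\eps))^2$ — on which $\rank(q)$ differs by more than $2\eps n$; $\mathcal{A}_D$'s fixed output errs by more than $\eps n$ on one of them, a contradiction. This forces $s \ge c\,(1/\eps)\log(1/\eps)$, which is exactly the statement (and recovers the Hung–Ting bound with the explicit polynomial length restriction).

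The delicate step — the main obstacle — is making the within-run indistinguishability fully rigorous in the presence of eviction: during round $j$ the algorithm may temporarily store, and then discard, items lying in what later becomes the empty run around $I_j$, retaining residual count information. One must present round $j$'s items (and make the adaptive choice of $I_j$) so that $\mathcal{A}_D$ provably never learns the order of the run's items among themselves, and so that the adaptive choice interacts correctly with $\mathcal{A}_D$'s entire trajectory and not merely its final stored set — all while verifying that the disjointness and additivity of the resolving budgets is unaffected. The remaining pieces — the stream length bound, the choice of the constants $c$ and $C$, and the arithmetic ``$R/2$ levels $\times$ $\Omega(\eps n/R)$ ambiguity'' — are routine.
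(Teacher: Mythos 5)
The paper offers no proof of this lemma. It is stated as ``implicit in [Hung--Ting], Theorem~2'': the authors simply observe that the adversarial stream used in Hung and Ting's proof of their own Theorem~2 has length $O\bigl((1/\eps)^2\log^2(1/\eps)\bigr)$, and restate their lower bound with that explicit length restriction so it can be composed with the Yao-style reduction in the randomized theorem that follows. There is no adversary construction in the paper to compare against; the ``proof'' is a careful reading of the cited source.

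What you have written, by contrast, is an attempt to reconstruct the Hung--Ting adversary from scratch, which is more ambitious than what the paper does, and the high-level skeleton you give (nested zoom intervals $I_0\supset\cdots\supset I_R$ with $R=\Theta(\log(1/\eps))$, an adaptive adversary that places $I_j$ in a run of cells the algorithm appears not to track, disjoint per-level ``resolving budgets'' that add, and a final pair of value-consistent completions whose true ranks of $q$ differ by more than $2\eps n$) is indeed the shape of that argument. But you yourself flag the step that is the entire technical content of the Hung--Ting proof: a comparison-based algorithm is not merely a set of $s$ stored items, it also carries state derived from comparisons involving items it has since evicted, so ``few stored items in $I_{j-1}\setminus I_j$ at the end'' does not by itself imply the algorithm is ignorant of how round~$j$'s mass splits around $I_j$. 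The adversary must choose the run, and the presentation order within round~$j$, as a function of the algorithm's \emph{entire trajectory} (including transient storage and evictions), and one must prove a genuine indistinguishability invariant that survives eviction and is preserved under the adaptive choice of $I_j$. As written, the proposal asserts this step rather than proving it, so it does not yet constitute a proof of Lemma~\ref{lem:lb_specific}. If you want to close this gap you should either carry out the invariant carefully (essentially reproving Hung--Ting) or do what the paper does and cite their Theorem~2, noting only that their construction uses streams of the stated length.
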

Below we obtain a lower bound matching our result completely for the case of single quantile problem and almost completely for the case of $\eps$ approximation of all quantiles.
\begin{theorem}
Let $\cal A_R$ be a randomized comparison based algorithm solving the $\eps$ approximate single quantile problem with probability at least $1-\delta$. 
Then $\cal A_R$ must store at least $\Omega\left( (1/\eps) \log \log(1/\delta) \right)$ elements from the stream.
\end{theorem}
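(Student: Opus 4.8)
The plan is to prove the lower bound through Yao's minimax principle, reducing to deterministic algorithms against a carefully designed hard input distribution, and then to amplify the single deterministic hard instance of Lemma~\ref{lem:lb_specific} into a recursive construction whose depth is $\Theta(\log\log(1/\delta))$.

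First I would set up the standard reduction: by Yao's minimax principle it suffices to construct, for the given $\eps$ and $\delta$, a distribution $\mathcal{D}$ over input streams (of some finite length $n$ that the construction is free to choose) such that every \emph{deterministic} comparison based algorithm storing at most $s$ elements answers the final single quantile query with additive error exceeding $\eps n$ with probability strictly more than $\delta$ over $\mathcal{D}$, unless $s=\Omega\big((1/\eps)\log\log(1/\delta)\big)$. Since the model is comparison based, only the relative order of stream elements and of the query matters, so $\mathcal{D}$ may be described purely in terms of ranks, and the hardness is information theoretic about orderings rather than about value encodings.

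Second, I would build $\mathcal{D}$ from $d=\Theta(\log\log(1/\delta))$ nested ``scales''. The construction maintains a \emph{live} rank-interval; scale $i$ inserts a block $B_i$ of items that, relative to the current live interval and accuracy, is a hard instance in the sense made precise by Lemma~\ref{lem:lb_specific} (a trimmed-down version suffices, forcing $\Omega(1/\eps)$ stored elements to pin down a random target rank), and which, depending on fresh internal coins, contracts the live interval by a constant factor while relocating an embedded target rank inside it. The block sizes decay geometrically, so $n=\poly(1/\eps)\cdot 2^{O(d)}=\poly(1/\eps)\cdot\poly\log(1/\delta)$ is finite and well defined. The final query asks for an element whose rank equals the target produced by the whole chain of contractions; answering it within $\eps n$ forces the algorithm to have correctly tracked the outcome of every scale. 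Because the blocks are nested in rank-space and processed coarse-to-fine, by the time $B_i$ arrives the algorithm must simultaneously retain enough stored elements to resolve each still-active earlier scale, and the packing argument underlying Lemma~\ref{lem:lb_specific} says this costs $\Omega(1/\eps)$ stored elements per active scale.

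Third, for the analysis I would fix a deterministic algorithm with storage $s$. If $s<c(1/\eps)d$ for a small constant $c$, an accounting argument produces a scale $i^\star$ at which the algorithm is ``under-resourced''; conditioned on the coins of scales $1,\dots,i^\star-1$ — which, up to the algorithm's bounded memory, are all it could have used — scale $i^\star$ presents a fresh hard instance on which it errs with probability at least a universal constant $\rho$. Chaining these conditional failure events across the scales shows that the probability of answering the final query correctly is at most $1-\rho<1-\delta$ whenever $s<c(1/\eps)(d-1)$; taking $d$ as large as the construction permits gives the stated bound. I expect the main obstacle to be exactly why $d$ cannot be pushed past $\Theta(\log\log(1/\delta))$: one must rule out that the algorithm \emph{amortizes} its errors across scales, cancelling local mistakes so that the \emph{final} answer stays accurate — which is precisely what the matching upper bound does via Hoeffding concentration over the many compactions at the low levels. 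The construction must therefore allocate so few independent random choices to each of the top $\Theta(\log\log(1/\delta))$ scales that no concentration can help there, so the per-scale errors are forced to behave adversarially; making this quantitatively tight — so that $\log(1/\delta_{\text{achievable}})$ can grow only geometrically in the number of scales, mirroring the $\log(1/\delta)\sim\exp(\#\text{special levels})$ behaviour of the algorithm — is the delicate part of the argument.
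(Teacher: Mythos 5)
Your proposal takes a genuinely different route from the paper's. The paper gives a very short derandomization argument: it sets $\delta = 1/(2\cdot n!)$, observes that by a union bound over the $n!$ possible orderings a fixed random string $r$ exists for which $\mathcal{A}_R$ succeeds on \emph{every} length-$n$ input, hardcodes $r$ to obtain a deterministic comparison-based algorithm $\mathcal{A}_D(r)$ correct on all streams of length $n = C(1/\eps)^2\log^2(1/\eps)$, and then invokes Lemma~\ref{lem:lb_specific}. For that choice of $n$ one has $\log\log(1/\delta) = \Theta(\log n) = \Theta(\log(1/\eps))$, so the deterministic lower bound of $\Omega((1/\eps)\log(1/\eps))$ translates directly into $\Omega((1/\eps)\log\log(1/\delta))$ for the randomized algorithm. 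No hard distribution, no Yao, no recursive construction.

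Your route — Yao's principle plus a nested multi-scale hard distribution — is not just longer, it has a gap that you yourself flag and that I believe is fatal as the argument is currently laid out. You claim that for a storage budget $s < c(1/\eps)(d-1)$ the failure probability is at least a universal constant $\rho$, and then want to push $d$ up to $\Theta(\log\log(1/\delta))$. But if such a construction existed for arbitrary $d$ it would yield an $\Omega((1/\eps)d)$ lower bound at \emph{constant} failure probability, contradicting the paper's $O(1/\eps)$ upper bound for constant $\delta$. So some quantitative obstruction must cap $d$ at $\Theta(\log\log(1/\delta))$, and that obstruction is precisely the amortization/cancellation phenomenon: the upper-bound algorithm makes $\Theta(w_h)$-scale errors at \emph{every} level and relies on Hoeffding cancellation across levels, so an adversary cannot simply force one ``under-resourced'' scale to err and conclude the final answer is wrong. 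Your chaining step (``conditioned on the coins of scales $1,\dots,i^\star-1$ ... it errs with probability $\rho$'') treats per-scale errors as if they propagate additively to the final query, which is exactly what the algorithm's random sign-cancellation defeats. You acknowledge this as ``the delicate part of the argument,'' but it is in fact the entire content of the theorem; without it the proof does not establish the claimed bound, and with a naive version it would prove something false. By contrast, the paper's derandomization trick sidesteps all of this by operating in the regime $\delta \approx 1/n!$, where the lower bound is inherited wholesale from the deterministic case and no distributional hard instance needs to be engineered.
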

\begin{proof}
Assume by contradiction that there exists a randomized algorithm $\cal A_R$ that succeeds in computing a single quantile approximation up to error $\eps$ with probability $1-\delta$ while storing $o\left( (1/\eps) \log \log(1/\delta) \right)$ elements from the stream.
Let $n$ be the length of the stream and $\delta = 1/2n!$. 
With probability $1/2$ the randomized algorithm succeeds simultaneously for all $n!$ possible inputs.
Let $r$ denote a sequence of random bits used by $\cal A_R$ in one of these instances. 
It is now possible to construct a deterministic algorithm ${\cal A}_D(r)$ which is identical to ${\cal A}_R$ but with $r$ hardcoded into it.
Note that ${\cal A}_D(r)$ deterministically succeeds for streams of length $n$. 
Let $n=C (1/\eps)^2 \log(1/\eps)^2$ for the same $C$ as in Lemma~\ref{lem:lb_specific}.
We obtain that ${\cal A}_D(r)$ succeeds on all streams of length $C (1/\eps)^2 \log(1/\eps)^2$ while storing $o((1/\eps) \log(1/\eps))$ elements from the stream.
This contradicts Lemma~\ref{lem:lb_specific} above.
\end{proof}
\section{Discussion}
The lower bound above perfectly matched the single quantile approximation result we achieve.
For the all quantiles problem, using the union bound over a set of $O(1/\eps)$ quantiles shows that $O((1/\eps)\log\log(1/\eps))$ elements suffice. 
This leaves a potential gap of $\log\log(1/\eps)$ for that problem.

\section*{Acknowledgments}
The authors want to thank Sanjeev Khanna, Rafail Ostrovsky, Jeff Phillips, Graham Cormode, Andrew McGregor and Piotr Indyk for very helpful discussions and pointers.


\begin{thebibliography}{10}

\bibitem{Wang13}
Lu~Wang, Ge~Luo, Ke~Yi, and Graham Cormode.
\newblock Quantiles over data streams: An experimental study.
\newblock In {\em Proceedings of the 2013 ACM SIGMOD International Conference
  on Management of Data}, SIGMOD '13, pages 737--748, New York, NY, USA, 2013.
  ACM.

\bibitem{GK2016}
Michael~B. Greenwald and Sanjeev Khanna.
\newblock Quantiles and equidepth histograms over streams.
\newblock In J.~Gehrke M.~Garofalakis and R.~Rastogi, editors, {\em In Data
  Stream Management: Processing High-Speed Data Streams}. Springer, 2016.

\bibitem{Manku99}
Gurmeet~Singh Manku, Sridhar Rajagopalan, and Bruce~G. Lindsay.
\newblock Random sampling techniques for space efficient online computation of
  order statistics of large datasets.
\newblock {\em SIGMOD Rec.}, 28(2):251--262, June 1999.

\bibitem{MUNRO1980315}
J.I. Munro and M.S. Paterson.
\newblock Selection and sorting with limited storage.
\newblock {\em Theoretical Computer Science}, 12(3):315 -- 323, 1980.

\bibitem{GreenwaldK01}
Michael Greenwald and Sanjeev Khanna.
\newblock Space-efficient online computation of quantile summaries.
\newblock In {\em Proceedings of the 2001 ACM SIGMOD International Conference
  on Management of Data}, SIGMOD '01, pages 58--66, New York, NY, USA, 2001.
  ACM.

\bibitem{FelberO15a}
David Felber and Rafail Ostrovsky.
\newblock A randomized online quantile summary in ${O}((1/\varepsilon) \log
  (1/\varepsilon))$ words.
\newblock In {\em Approximation, Randomization, and Combinatorial Optimization.
  Algorithms and Techniques, {APPROX/RANDOM} 2015, August 24-26, 2015,
  Princeton, NJ, {USA}}, pages 775--785, 2015.

\bibitem{Agarwal12}
Pankaj~K. Agarwal, Graham Cormode, Zengfeng Huang, Jeff Phillips, Zhewei Wei,
  and Ke~Yi.
\newblock Mergeable summaries.
\newblock In {\em Proceedings of the 31st Symposium on Principles of Database
  Systems}, PODS '12, pages 23--34, New York, NY, USA, 2012. ACM.

\bibitem{Shrivastava04}
Nisheeth Shrivastava, Chiranjeeb Buragohain, Divyakant Agrawal, and Subhash
  Suri.
\newblock Medians and beyond: New aggregation techniques for sensor networks.
\newblock In {\em Proceedings of the 2Nd International Conference on Embedded
  Networked Sensor Systems}, SenSys '04, pages 239--249, New York, NY, USA,
  2004. ACM.

\bibitem{brodnik2013space}
Andrej Brodnik, Alejandro Lopez-Ortiz, Venkatesh Raman, and Alfredo Viola.
\newblock {\em Space-Efficient Data Structures, Streams, and Algorithms: Papers
  in Honor of J. Ian Munro, on the Occasion of His 66th Birthday}, volume 8066.
\newblock Springer, 2013.

\bibitem{hung2010omega}
Regant~YS Hung and Hingfung~F Ting.
\newblock A $(1/\varepsilon) \log (1/\varepsilon)$ space lower bound for
  finding $\varepsilon$-approximate quantiles in a data stream.
\newblock In {\em Frontiers in Algorithmics}, pages 89--100. Springer, 2010.

\end{thebibliography}
\end{document}